\documentclass[twoside,11pt,letterpaper]{article}

\usepackage[margin=1in]{geometry}
\usepackage{amsmath, amssymb, amsthm}
\usepackage{bm}
\usepackage{hyperref}
\usepackage{enumitem}
\usepackage[dvipsnames]{xcolor}
\usepackage{setspace}
\usepackage{float}
\usepackage{graphicx}
\usepackage{tikz}
\usetikzlibrary{arrows.meta}

\hypersetup{
colorlinks=true,
linkcolor=MidnightBlue,
urlcolor=MidnightBlue,
citecolor=MidnightBlue
}

\renewcommand{\d}{\partial}
\newcommand{\bS}{\mathbb{S}}

\newcommand{\cA}{\mathcal{A}}
\newcommand{\cB}{\mathcal{B}}
\newcommand{\cL}{\mathcal{L}}

\newcommand{\Tr}{\operatorname{Tr}}

\newcommand{\Ric}{\operatorname{Ric}}

\theoremstyle{plain}
\newtheorem{theorem}{Theorem}[section]
\newtheorem{proposition}[theorem]{Proposition}

\theoremstyle{definition}

\theoremstyle{remark}
\newtheorem{remark}[theorem]{Remark}

\numberwithin{equation}{section}

\title{Ricci flow for the Bures--Helstrom qubit metric}

\author{Andrew Lesniewski\\
Department of Mathematics\\
Baruch College\\
One Bernard Baruch Way\\
New York, NY 10010\\
USA}

\date{}

\begin{document}

\maketitle

\begin{center}
\textit{In memory of Mary Beth Ruskai (1944--2023), friend and collaborator}
\end{center}

\begin{abstract}
The Bures--Helstrom metric is the minimal monotone Riemannian metric on the state space of a qubit. With the quantum Fisher normalization used here, it identifies the Bloch ball with a geodesic hemisphere of the unit round three--sphere. We describe its Ricci flow explicitly. In a general rotationally symmetric gauge the flow is a coupled system for the radial lapse and warping factor; a single scalar equation appears only after a Hamilton--DeTurck gauge choice. In the corresponding moving DeTurck frame the squared warping function $\Psi=\Phi^2$ satisfies the linear forced heat equation
\begin{equation*}
D_t\Psi=\Psi_{\tau\tau}-2,
\end{equation*}
while the fixed-lapse coordinate form contains the associated transport term. Since the Bures--Helstrom metric is Einstein, the geometric flow itself is the homothetic shrinker
\begin{equation*}
g(t)=(1-4t)g_{\mathrm{BH}},
\end{equation*}
with scalar curvature $6/(1-4t)$ and extinction time $T=1/4$. Thus the metric remains inside the monotone cone for all $t<T$ and leaves the cone of nondegenerate Riemannian metrics only through the collapsed limit. We also record the volume--normalized flow, for which the Bures--Helstrom metric is a fixed point. Its linearization is the shifted round--sphere Laplacian $\Delta_{\mathbb S^3}+3$, with zonal spectrum
\begin{equation*}
\sigma_\ell=-(\ell-1)(\ell+3).
\end{equation*}
The perturbations compatible with the totally geodesic pure--state boundary are the reflection--even modes $\ell=0,2,4,\dots$; after removing the scaling mode $\ell=0$ the fixed point is stable, with spectral gap $5$.
\end{abstract}

\newpage

\section{Introduction}
\label{sec:intro}

The faithful states of a qubit form the open Bloch ball
\begin{equation*}
\cB=\Big\{\rho=\frac12\,(I+\mathbf r\cdot\boldsymbol\sigma):|\mathbf r|<1\Big\},
\end{equation*}
where $\boldsymbol{\sigma}=\{\sigma_1,\sigma_2,\sigma_3\}$ are the Pauli matrices. The Riemannian metrics $g_\rho$ on $\cB$ that contract under every completely positive trace--preserving map \cite{NC10} are the \emph{monotone metrics} (or Petz metrics). They are classified by operator monotone functions \cite{B97} and give the infinitesimal form of the contractivity of quantum relative entropy and quantum Fisher information \cite{P96,PS96,LR99}. Their Riemannian geometry, in particular their curvature, has been studied in detail \cite{D99,GI03}. Among them, the Bures--Helstrom metric, equivalently the symmetric logarithmic derivative quantum Fisher metric, is the minimal monotone metric and the canonical statistical metric on the qubit state space.

The Bures--Helstrom metric measures the local distinguishability of nearby quantum states: if
\begin{equation*}
\rho_\epsilon=\rho+\epsilon X,
\end{equation*}
then, to second order, the squared statistical distance between $\rho$ and $\rho_\epsilon$ is proportional to $g_\rho(X,X)$. Thus the metric is the local quadratic form governing quantum Fisher information, or equivalently the Hessian of the corresponding relative-entropy landscape.

The physical motivation comes from the renormalization group. The contraction of statistical distinguishability under irreversible, completely positive dynamics is a form of \emph{informational coarse-graining}: a quantum channel or Lindblad semigroup dissipates distinguishability much as a renormalization-group step dissipates short-distance detail, and so induces a flow of the metric on the space of states. This is the informational analogue of Friedan's result \cite{F80,F85} that the
renormalization-group flow of the target-space metric of a two-dimensional nonlinear sigma model is, to leading order, the Ricci flow introduced by Hamilton \cite{H82},
\begin{equation*}
\d_t g=-2\Ric(g)+\cL_V g,
\end{equation*}
where $\cL_V$ is a reparametrization (gauge) term. In this dictionary the coarse-graining parameter is an \emph{informational scale}, and the loss of distinguishability is the analogue of a beta function.

Against this backdrop, this paper studies the \emph{intrinsic} Ricci flow of the Bures--Helstrom metric. Monotone metrics measure distinguishability of quantum states, while Ricci flow evolves a metric by its own curvature; it therefore furnishes a natural intrinsic geometric evolution on the space of statistical distinguishability metrics. This is distinct from the extrinsic contraction induced by a chosen quantum channel or Lindblad semigroup: the latter depends on the dynamics, whereas Ricci flow is determined by the metric itself.

The Bures--Helstrom case is the simplest possible test model. With the normalization used in this paper, the metric identifies the Bloch ball with a geodesic hemisphere of the unit round three--sphere. Its geometric Ricci flow is therefore the standard homothetic shrinking flow of a positively curved Einstein metric. This elementary observation is the backbone of the paper. The point of carrying out the analysis nevertheless is twofold. First, the calculation provides a fully explicit Ricci-flow model inside the monotone cone of qubit metrics. Second, it clarifies a gauge issue that becomes important for the more general monotone metrics: in rotational symmetry, Ricci flow does not automatically preserve geodesic radial gauge, and the scalar equation for the warping function is meaningful only after fixing the radial lapse.

We write rotationally symmetric metrics on the ball in the form
\begin{equation}
\label{eq:warped-metric}
g=d\tau^2+\Phi(\tau,t)^2\,d\Omega^2,
\end{equation}
where $d\Omega^2$ is the unit round metric on the Bloch two--sphere. In the moving DeTurck frame, with $D_t$ denoting differentiation along the chosen radial reference frame, the squared warping factor $\Psi=\Phi^2$ satisfies
\begin{equation*}
D_t\Psi=\Psi_{\tau\tau}-2,
\end{equation*}
where $\partial_\tau$ is differentiation with respect to the instantaneous radial arclength. In a fixed radial coordinate the same equation includes the corresponding transport term. The linearity of this equation is special to dimension three. For the Bures--Helstrom initial condition,
\begin{equation*}
\Phi_{\mathrm{BH}}(\tau)=\sin\tau,
\end{equation*}
the actual geometric flow is simply
\begin{equation*}
g(t)=(1-4t)g_{\mathrm{BH}},
\end{equation*}
with finite extinction time $T=1/4$.

The structure of the paper is as follows. Section~\ref{sec:BH} recalls the Bures--Helstrom metric and its identification with a round hemisphere. Section~\ref{sec:gauge} derives the rotationally symmetric Ricci-flow equations and explains the Hamilton--DeTurck reduction. Section~\ref{sec:shrinker} gives the
exact shrinking solution and its monotonicity interpretation. Section~\ref{sec:normalized} treats the volume--normalized flow and computes the linearized spectrum. Section~\ref{sec:outlook} indicates the extension to the full monotone cone.

\section{The Bures--Helstrom metric as a round hemisphere}
\label{sec:BH}

\subsection{Monotone metrics and the symmetric logarithmic derivative}

A Riemannian metric on the interior of the qubit state space is \emph{monotone}, or Petz, if it contracts under every completely positive trace--preserving map $\Phi$ \cite{NC10},
\begin{equation*}
g_{\Phi(\rho)}(\Phi_*X,\Phi_*X)\le g_\rho(X,X).
\end{equation*}
Petz \cite{P96} classified all such metrics. In the eigenbasis of $\rho$, with eigenvalues $\lambda_i>0$, a monotone metric acts on a tangent vector $X$ (a traceless
Hermitian matrix) as
\begin{equation}
\label{eq:petz-form}
g_\rho(X,X)=\sum_{i,j}c(\lambda_i,\lambda_j)\,|X_{ij}|^2,
\end{equation}
where the Morozova--Chentsov function
\begin{equation}
\label{eq:MC-function}
c(x,y)=\frac{1}{y\,f(x/y)}
\end{equation}
is built from an operator monotone function \cite{B97} $f:(0,\infty)\to(0,\infty)$ normalized by $f(1)=1$ and obeying the symmetry $f(t)=t\,f(1/t)$. The normalization fixes the
classical Fisher--Rao metric along commuting (diagonal) directions, since then $c(\lambda_i,\lambda_i)=1/\lambda_i$; the function $f$ encodes the freedom in extending
it to noncommuting directions.

Among normalized operator monotone functions there is a largest and a smallest
\cite{PS96},
\begin{equation}
\label{eq:sandwich}
f_{\mathrm{RLD}}(t)=\frac{2t}{1+t}\ \le\ f(t)\ \le\ f_{\mathrm{SLD}}(t)=\frac{1+t}{2},
\end{equation}
the harmonic and arithmetic means. Since $c$ in \eqref{eq:MC-function} is decreasing in $f$, the largest $f$ produces the \emph{smallest} metric. The minimal monotone metric is therefore generated by $f_{\mathrm{SLD}}(t)=(1+t)/2$, equivalently by the
Morozova--Chentsov function
\begin{equation}
\label{eq:MC-BH}
c_{\mathrm{BH}}(x,y)=\frac{2}{x+y}.
\end{equation}
This is the Bures--Helstrom, or symmetric logarithmic derivative (SLD), metric. For a smooth family $\rho(\theta)$ the SLD is the Hermitian operator $L$ defined by
\begin{equation}
\label{eq:SLD-def}
\partial_\theta\rho=\tfrac12\left(L\rho+\rho L\right),
\end{equation}
and the metric is
\begin{equation}
\begin{split}
\label{eq:SLD-metric}
g_{\mathrm{BH}}(X,X)&=\Tr(\rho L^2)\\
&=\Tr(XL),
\end{split}
\end{equation}
for $X=\partial_\theta\rho$.

\subsection{The Bloch--ball metric}

Let
\begin{equation*}
\rho=\frac12(I+\mathbf r\cdot\boldsymbol\sigma),\qquad r=|\mathbf r|<1,
\end{equation*}
with eigenvalues
\begin{equation*}
\lambda_\pm=\frac{1\pm r}{2}.
\end{equation*}
A tangent vector is $X=\tfrac12\,d\mathbf r\cdot\boldsymbol\sigma$. Choosing $\mathbf r$ along the third axis, the diagonal entries of $X$ carry the radial variation $dr$ and the off--diagonal entries the transverse variation. Evaluating the Petz form \eqref{eq:petz-form} with the Bures--Helstrom function \eqref{eq:MC-BH}, the diagonal part contributes $|X_{++}|^2/\lambda_++|X_{--}|^2/\lambda_-=dr^2/(1-r^2)$ and the off--diagonal part contributes $c_{\mathrm{BH}}(\lambda_+,\lambda_-)\,(|X_{+-}|^2+|X_{-+}|^2)=r^2\,d\Omega^2$,
so that
\begin{equation}
\label{eq:BH-bloch}
g_{\mathrm{BH}}=
\frac{dr^2}{1-r^2}+r^2\,d\Omega^2,
\end{equation}
where $d\Omega^2$ is the round metric on the Bloch sphere of directions \cite{PS96}. Some authors normalize by the Bures distance, which differs from \eqref{eq:BH-bloch} by an overall factor of $\tfrac14$; we use \eqref{eq:BH-bloch} because it presents the state space as the \emph{unit} round three--sphere, as we now show.

\begin{remark}[Operational meaning]
The metric $g_{\mathrm{BH}}$ is the quantum Fisher information. For a one--parameter family $\rho(\theta)$, the quantum Cram\'er--Rao bound \cite{He76,BC94} states that any unbiased estimator $\hat\theta$ formed from $\nu$ independent measurements has variance
\begin{equation*}
\operatorname{Var}(\hat\theta)\ \ge\ \frac{1}{\nu\,g_{\mathrm{BH}}(\dot\rho,\dot\rho)} .
\end{equation*}
Thus $g_{\mathrm{BH}}$ quantifies the statistical distinguishability of neighboring states: the larger the metric, the more precisely the parameter can be estimated.
Globally, the geodesic distance is a monotone function of the fidelity $F(\rho,\sigma)$, proportional in this normalization to the Bures angle $\arccos\sqrt{F}$. The maximally mixed state sits at the center $r=0$, and the pure states lie on the boundary, at geodesic radius $\pi/2$.
\end{remark}

\subsection{The round hemisphere}

Set
\begin{equation}
\label{eq:tau-def}
r=\sin\tau,\qquad 0\le \tau<\frac{\pi}{2}\,,
\end{equation}
i.e.
\begin{equation*}
d\tau=\frac{dr}{\sqrt{1-r^2}}\,.
\end{equation*}
Then
\begin{equation}
\label{eq:BH-round}
g_{\mathrm{BH}}=d\tau^2+\sin^2\tau\,d\Omega^2.
\end{equation}
Thus the Bloch ball is the open geodesic hemisphere of the unit round three--sphere. The pure states correspond to the equator $\tau=\pi/2$. Since
\begin{equation*}
\frac{d}{d\tau}\sin\tau\Big|_{\tau=\pi/2}=0,
\end{equation*}
the boundary two--sphere is totally geodesic, and the metric extends smoothly by reflection to the full round $\bS^3$.

For a warped metric
\begin{equation*}
g=d\tau^2+\Phi(\tau)^2\,d\Omega^2,
\end{equation*}
the nonzero Ricci components are \cite{P16}
\begin{equation}
\label{eq:ricci-warped}
\begin{split}
\Ric_{\tau\tau}&=-2\,\frac{\Phi''}{\Phi},\\
\Ric_{ab}&=\left(1-(\Phi')^2-\Phi\Phi''\right)\gamma_{ab},
\end{split}
\end{equation}
where $\gamma$ is the unit round metric on $\bS^2$. The scalar curvature is
\begin{equation}
\label{eq:scalar-warped}
R=-4\,\frac{\Phi''}{\Phi}-2\,\frac{(\Phi')^2-1}{\Phi^2}.
\end{equation}
For $\Phi(\tau)=\sin\tau$, these formulas give
\begin{align}
\label{eq:einstein}
\Ric&=2g_{\mathrm{BH}},\\
\label{eq:scalar_curv}
R&=6.
\end{align}
Thus $g_{\mathrm{BH}}$ is an Einstein metric with the curvature of the unit round three--sphere.

\section{Rotationally symmetric Ricci flow and gauge}
\label{sec:gauge}

\subsection{The coupled radial system}

A general rotationally symmetric metric on the ball can be written as
\begin{equation}
\label{eq:general-radial}
g=N(\tau,t)^2\,d\tau^2+\Phi(\tau,t)^2\,d\Omega^2,
\end{equation}
where $N$ is the radial lapse and $\Phi$ is the warping factor. Let
\begin{equation*}
\d_s=\frac1N\,\d_\tau
\end{equation*}
denote differentiation with respect to arclength. The unnormalized Ricci flow
\begin{equation}
\label{eq:ricci-flow}
\d_t g=-2\Ric(g)
\end{equation}
becomes the coupled system \cite{AK04}
\begin{equation}
\label{eq:coupled}
\begin{split}
\d_t\log N&=2\,\frac{\Phi_{ss}}{\Phi},\\
\d_t\Phi&=\Phi_{ss}+\frac{\Phi_s^2}{\Phi}-\frac1{\Phi}.
\end{split}
\end{equation}

This is an important point. The geodesic radial gauge $N\equiv1$ is not automatically preserved by Ricci flow. Indeed, if $N=1$ at a given time, then
\begin{equation*}
\d_t\log N=2\,\frac{\Phi''}{\Phi}\,,
\end{equation*}
which is generally nonzero. Consequently, a scalar evolution equation for $\Phi$ alone is not a gauge-free statement. It becomes meaningful only after the radial coordinate has been fixed.

\begin{remark}
For the Bures--Helstrom profile $\Phi=\sin\tau$, one has $\Phi''/\Phi=-1$, and hence $\d_t\log N=-2$ at $t=0$ in the ungauged equation. This is the infinitesimal reflection of the homothetic shrinkage of the round hemisphere.
\end{remark}

\subsection{Hamilton--DeTurck gauge and transport}

The Hamilton--DeTurck flow is \cite{CK04}
\begin{equation}
\label{eq:HD}
\d_t g=-2\Ric(g)+\cL_Vg,
\end{equation}
where $V$ is a time-dependent vector field. The DeTurck term is generated by diffeomorphisms and hence does not change the underlying geometric Ricci flow; it
only chooses coordinates along the flow.

We take $V$ radial,
\begin{equation*}
V=V^\tau(\tau,t)\d_\tau,
\end{equation*}
and choose it so that the lapse remains fixed, $N\equiv1$. In this fixed-lapse gauge the radial component of \eqref{eq:HD} gives
\begin{equation}
\label{eq:Vstar}
\begin{split}
\d_\tau V^\tau&=-2\,\frac{\Phi''}{\Phi},\\
V^\tau(0,t)&=0.
\end{split}
\end{equation}
Equivalently,
\begin{equation}
\label{eq:Vstar-int}
V^\tau(\tau,t) = -2\int_0^\tau \frac{\Phi''(\tau',t)}{\Phi(\tau',t)}\,d\tau' .
\end{equation}
For the Bures--Helstrom profile $\Phi=\sin\tau$, this gives
\begin{equation*}
V^\tau(\tau)=2\tau .
\end{equation*}

The angular component of \eqref{eq:HD} then gives
\begin{equation}
\label{eq:phi-deturck}
\Phi_t = \Phi'' + \frac{(\Phi')^2}{\Phi} - \frac1{\Phi} + V^\tau\Phi' .
\end{equation}
Thus the fixed-lapse Hamilton--DeTurck representative contains a transport term. Equivalently, if
\begin{equation}
\label{eq:material-derivative}
D_t=\d_t-V^\tau\d_\tau
\end{equation}
denotes the derivative in the reference frame transported by the DeTurck vector field, then \eqref{eq:phi-deturck} becomes
\begin{equation}
\label{eq:phi-material}
D_t\Phi = \Phi'' + \frac{(\Phi')^2}{\Phi} - \frac1{\Phi}.
\end{equation}
This is the form in which the scalar cancellation below is most transparent. The distinction between $\d_t$ and $D_t$ is essential: the pure heat equation is a moving-frame equation, while the fixed-coordinate DeTurck equation carries the drift $V^\tau\d_\tau$.

\subsection{The linear equation for the squared warping factor}

Define
\begin{equation*}
\Psi=\Phi^2 .
\end{equation*}
In the moving DeTurck frame, \eqref{eq:phi-material} linearizes exactly:
\begin{equation}
\label{eq:linear-material}
D_t\Psi=\Psi_{\tau\tau}-2 .
\end{equation}
Equivalently, in the fixed-lapse coordinate $\tau$ one has
\begin{equation}
\label{eq:linear-deturck}
\Psi_t=\Psi_{\tau\tau}-2+V^\tau\Psi_\tau .
\end{equation}

\begin{proposition}[Scalar reduction]
\label{prop:linear-reduction}
Let $g=d\tau^2+\Phi(\tau,t)^2d\Omega^2$ be a rotationally symmetric Hamilton--DeTurck representative with $N\equiv1$, and let $V^\tau$ be chosen by \eqref{eq:Vstar}. Then $\Psi=\Phi^2$ satisfies the drift equation
\begin{equation*}
\Psi_t=\Psi_{\tau\tau}-2+V^\tau\Psi_\tau .
\end{equation*}
Equivalently, in the moving frame defined by $D_t=\d_t-V^\tau\d_\tau$, it satisfies the forced heat equation
\begin{equation*}
D_t\Psi=\Psi_{\tau\tau}-2 .
\end{equation*}
The cancellation is special to metrics with two-dimensional spherical fibers, equivalently to ambient dimension three.
\end{proposition}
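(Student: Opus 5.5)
The argument is a direct computation starting from the angular DeTurck equation \eqref{eq:phi-deturck}. In the fixed-lapse gauge $N\equiv1$, the radial component of \eqref{eq:HD} is consistent precisely because $V^\tau$ solves \eqref{eq:Vstar}. So the only remaining evolution is
\begin{equation*}
\Phi_t=\Phi''+\frac{(\Phi')^2}{\Phi}-\frac1\Phi+V^\tau\Phi'.
\end{equation*}
I would first re-derive this quickly to confirm the signs. The angular part of $-2\Ric$ is $-2\bigl(1-(\Phi')^2-\Phi\Phi''\bigr)\gamma$ by \eqref{eq:ricci-warped}. Since $V$ is radial, $(\cL_Vg)_{ab}=V^\tau\d_\tau(\Phi^2)\gamma_{ab}$. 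Equating with $\d_t(\Phi^2)\gamma_{ab}$ gives
\begin{equation*}
2\Phi\Phi_t=2\Phi\Phi''+2(\Phi')^2-2+2V^\tau\Phi\Phi'.
\end{equation*}
This is already the equation for $\Psi$ once we divide by nothing at all.

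The key step is to multiply by $2\Phi$ and use the identities $\Psi_t=2\Phi\Phi_t$, $\Psi_\tau=2\Phi\Phi'$ and $\Psi_{\tau\tau}=2(\Phi')^2+2\Phi\Phi''$. The right-hand side then becomes $\Psi_{\tau\tau}-2+V^\tau\Psi_\tau$. This is exactly the drift equation \eqref{eq:linear-deturck}.

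Subtracting $V^\tau\Psi_\tau$ and recalling $D_t=\d_t-V^\tau\d_\tau$ from \eqref{eq:material-derivative} gives the moving-frame form $D_t\Psi=\Psi_{\tau\tau}-2$. This step is purely algebraic, because $D_t$ is a first-order derivation. Consequently $D_t(\Phi^2)=2\Phi D_t\Phi$, so the same conclusion also follows directly from \eqref{eq:phi-material}.

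For the dimensional remark, I would repeat the computation for fibers $\bS^{n-1}$. In that case the angular Ricci coefficient is $(n-2)\bigl(1-(\Phi')^2\bigr)-\Phi\Phi''$, and the equation reads
\begin{equation*}
\Phi_t=\Phi''+(n-2)\frac{(\Phi')^2-1}{\Phi}+V^\tau\Phi'.
\end{equation*}
Multiplying by $2\Phi$ gives $\Psi_{\tau\tau}+2(n-3)(\Phi')^2-2(n-2)+V^\tau\Psi_\tau$. In terms of $\Psi$ this is
\begin{equation*}
\Psi_{\tau\tau}+\frac{(n-3)\Psi_\tau^2}{2\Psi}-2(n-2)+V^\tau\Psi_\tau.
\end{equation*}
The nonlinear term vanishes exactly when $n=3$. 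I would also note, but not pursue, that no other power $\Phi^k$ linearizes the equation: the required coefficient matching forces $k=2$ and $n=3$.

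The main obstacle is not the algebra but keeping the gauge bookkeeping honest. One must confirm that imposing $N\equiv1$ through \eqref{eq:Vstar} is compatible with the radial equation, namely that $(\cL_Vg)_{\tau\tau}=2\d_\tau V^\tau$ cancels $-2\Ric_{\tau\tau}=4\Phi''/\Phi$. One must also keep in mind that $V^\tau$ depends nonlocally on $\Psi$ through \eqref{eq:Vstar-int}. As a result, only the moving-frame equation is genuinely linear, while the fixed-coordinate form \eqref{eq:linear-deturck} is linear in $\Psi$ only for a given drift.
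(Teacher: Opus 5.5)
Your proposal is correct and follows the paper's proof essentially line for line. You multiply the angular DeTurck equation by $2\Phi$, substitute $\Psi_\tau=2\Phi\Phi'$ and $\Psi_{\tau\tau}=2\Phi\Phi''+2(\Phi')^2$, and subtract the drift. For fibers $\bS^{n-1}$ the residual term $2(n-3)(\Phi')^2$ is the paper's $2(k-2)(\Phi')^2$ with $k=n-1$.

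One small correction to your unpursued aside that no power $\Phi^k$ other than $k=2$, $n=3$ linearizes. It misses $n=2$, where $\Phi$ itself obeys $\Phi_t=\Phi''+V^\tau\Phi'$. This has no bearing on the proposition.
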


\begin{proof}
From \eqref{eq:phi-deturck},
\begin{align*}
\Psi_t &=2\Phi\Phi_t\\
&= 2\Phi\Phi'' +2(\Phi')^2 -2 +2\Phi V^\tau\Phi' .
\end{align*}
Since
\begin{equation*}
\begin{split}
\Psi''&=2\Phi\Phi''+2(\Phi')^2,\\
\Psi'&=2\Phi\Phi',
\end{split}
\end{equation*}
we obtain
\begin{equation*}
\Psi_t=\Psi''-2+V^\tau\Psi',
\end{equation*}
which is \eqref{eq:linear-deturck}. Subtracting $V^\tau\Psi_\tau$ from both sides gives the moving-frame form \eqref{eq:linear-material}. For a warped product with fiber $\bS^k$, the angular Ricci term gives instead
\begin{equation*}
D_t\Psi = \Psi'' +2(k-2)(\Phi')^2 -2(k-1),
\end{equation*}
so the quadratic term cancels exactly when $k=2$.
\end{proof}

\begin{remark}[Physical reading]
The squared warping factor $\Psi=\Phi^2$ is the squared radius of the two--sphere of states at fixed Bloch radius, so it measures the angular part of the local
statistical area. In the moving-frame equation
\begin{equation*}
D_t\Psi=\Psi_{\tau\tau}-2,
\end{equation*}
the diffusion term redistributes this angular distinguishability along the radial direction, while the constant sink $-2$ is the contribution of the positively curved spherical fibers. In a fixed DeTurck coordinate the same evolution is represented by the drift equation
\begin{equation*}
\Psi_t=\Psi_{\tau\tau}-2+V^\tau\Psi_\tau .
\end{equation*}
\end{remark}

In what follows, the notation $D_t\Psi=\Psi_{\tau\tau}-2$ refers to this moving-frame/arclength formulation: the time derivative is taken along the chosen DeTurck reference frame, and $\partial_\tau$ is the instantaneous arclength derivative, the gauge $N\equiv1$ being in force. When one works instead in a fixed coordinate, the corresponding drift term must be kept. The geometric conclusions, however, are coordinate independent, and are most transparent from the homothetic solution described next.

\section{The Bures--Helstrom shrinker}
\label{sec:shrinker}

Since $g_{\mathrm{BH}}$ is an Einstein metric \eqref{eq:einstein}, the Ricci flow starting at $g_{\mathrm{BH}}$ is a homothetic shrinking solution \cite{CK04}.

\begin{theorem}[Exact Bures--Helstrom Ricci flow]
\label{thm:BH-flow}
The unnormalized Ricci flow with initial condition $g(0)=g_{\mathrm{BH}}$ is
\begin{equation}
\label{eq:homothetic}
g(t)=(1-4t)g_{\mathrm{BH}},
\qquad 0\le t<\frac14.
\end{equation}
Equivalently, in the comoving label $\xi\in[0,\pi/2]$,
\begin{equation}
\label{eq:Psi-shrinker}
\Psi(\xi,t)=(1-4t)\sin^2\xi.
\end{equation}
The scalar curvature is
\begin{equation}
\label{eq:R-shrinker}
R(t)=\frac{6}{1-4t},
\end{equation}
and the solution becomes extinct at
\begin{equation}
\label{eq:extinction}
T=\frac14.
\end{equation}
\end{theorem}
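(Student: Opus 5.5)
The plan is to use the Einstein property \eqref{eq:einstein} with a homothetic ansatz, check that it solves \eqref{eq:ricci-flow}, and then get uniqueness from standard theory. I make the ansatz $g(t)=\lambda(t)g_{\mathrm{BH}}$ with $\lambda(0)=1$. Ricci curvature is invariant under constant rescaling of the metric, so $\Ric(g(t))=\Ric(g_{\mathrm{BH}})=2g_{\mathrm{BH}}$. The flow equation $\d_t g=-2\Ric(g)$ then reduces to the scalar ODE $\lambda'(t)g_{\mathrm{BH}}=-4g_{\mathrm{BH}}$. Its solution is $\lambda(t)=1-4t$, which gives \eqref{eq:homothetic}. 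The metric stays positive definite exactly for $t<1/4$.

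Next I would read off the other formulas. Scalar curvature scales inversely with a constant conformal factor: $R(\lambda g)=\lambda^{-1}R(g)$. Combined with \eqref{eq:scalar_curv}, this gives \eqref{eq:R-shrinker}.

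For \eqref{eq:Psi-shrinker}, I would write $g(t)=(1-4t)\,d\xi^2+(1-4t)\sin^2\xi\,d\Omega^2$, where $\xi$ is the original arclength label from \eqref{eq:tau-def}, now viewed as a comoving label. The squared warping factor is then $\Psi=(1-4t)\sin^2\xi$.

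As a consistency check with Section~\ref{sec:gauge}, I would verify the coupled system \eqref{eq:coupled} with $N=\sqrt{1-4t}$, $\Phi=\sqrt{1-4t}\sin\xi$ and $\d_s=N^{-1}\d_\xi$. One gets $\Phi_{ss}/\Phi=-1/(1-4t)$, so $\d_t\log N=-2/(1-4t)$, as required. The $\Phi$ equation checks the same way: both sides equal $-2\sin\xi/\sqrt{1-4t}$, using $\Phi_s^2-1=-\sin^2\xi$.

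Extinction at $T=1/4$ follows from three facts as $t\to1/4$: $R(t)\to\infty$, $\lambda\to0$, and the volume $(1-4t)^{3/2}\mathrm{Vol}(g_{\mathrm{BH}})$ collapses to zero.

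The main obstacle is justifying ``the'' Ricci flow, i.e.\ uniqueness, because the Bloch ball is open and incomplete. I would handle it as \S\ref{sec:BH} suggests. The boundary sphere is totally geodesic and $g_{\mathrm{BH}}$ extends by reflection to the round $\bS^3$. I would consider the flow on the closed manifold $\bS^3$ and invoke Hamilton's short-time existence and uniqueness for compact manifolds \cite{H82,CK04}. The explicit homothetic solution is then the unique flow there.

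Uniqueness also makes it invariant under the reflection isometry fixing the equator, since reflecting a solution gives another solution with the same initial data. Hence it restricts to the hemisphere with the pure-state boundary remaining totally geodesic. I would state the theorem's uniqueness in this sense: among flows that extend by reflection, or equivalently that preserve the totally geodesic boundary condition.
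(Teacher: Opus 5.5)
Your proposal is correct and follows the paper's proof: the homothetic ansatz with $\Ric(g(t))=\Ric(g_{\mathrm{BH}})=2g_{\mathrm{BH}}$ reduces the flow to $\lambda'=-4$, and the scaling $R(\lambda g)=\lambda^{-1}R(g)$ gives $R(t)=6/(1-4t)$ and extinction at $T=1/4$ (your consistency check against the coupled radial system is also right). The paper does not discuss uniqueness, so your reflection to the round $\bS^3$ with Hamilton's uniqueness is a sound addition for flows that extend smoothly by reflection; however, drop the claim that this is ``equivalent'' to preserving a totally geodesic boundary, since a vanishing second fundamental form only controls the first normal derivative of the metric, while a smooth reflection needs all odd normal derivatives to vanish.
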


\begin{proof}
If $g(t)=a(t)^2g_{\mathrm{BH}}$, then
\begin{equation*}
\Ric(g(t))=\Ric(g_{\mathrm{BH}})=2g_{\mathrm{BH}}.
\end{equation*}
Thus
\begin{equation*}
\d_t g(t)=\frac{d}{dt}\, a(t)^2\,g_{\mathrm{BH}}
\end{equation*}
and Ricci flow gives
\begin{equation*}
\frac{d}{dt}\, a(t)^2\,g_{\mathrm{BH}}
=
-4g_{\mathrm{BH}}.
\end{equation*}
Hence
\begin{equation*}
a(t)^2=1-4t.
\end{equation*}
The scalar curvature of a round three--sphere of radius $a$ is $6/a^2$, giving
\begin{equation*}
R(t)=\frac6{1-4t}.
\end{equation*}
The radius tends to zero when $1-4t=0$, i.e. at $T=1/4$.
\end{proof}

In the comoving coordinate $\xi$, the metric is
\begin{equation*}
g(t)=(1-4t)\left(d\xi^2+\sin^2\xi\,d\Omega^2\right).
\end{equation*}
Thus all shape information is stationary and all dynamics is carried by the scalar scale
\begin{equation*}
a(t)^2=1-4t.
\end{equation*}
In arclength $s=a(t)\xi$, the same solution can be written as
\begin{equation*}
\Phi(s,t)=a(t)\sin\Big(\frac{s}{a(t)}\Big),
\qquad 0\le s\le \frac{\pi}{2}a(t).
\end{equation*}
The spatial interval shrinks together with the geometry.

\begin{remark}[Comoving heat equation]
\label{rem:comoving}
In the moving DeTurck frame the squared warping factor obeys the master equation $D_t\Psi=\Psi_{\tau\tau}-2$ of Section~\ref{sec:gauge}, with $\partial_{\tau\tau}$ the instantaneous arclength Laplacian. The comoving label $\xi$ is the frame transported by the DeTurck field, so along it $D_t$ acts as $\partial_t$ at fixed $\xi$. Because the arclength interval $s=\sqrt{1-4t}\,\xi$ shrinks, one has $\partial_{\tau\tau}=(1-4t)^{-1}\partial_{\xi\xi}$, and on the fixed comoving interval $\xi\in[0,\pi/2]$ the master equation becomes the linear, non-autonomous heat equation
\begin{equation}
\label{eq:comoving-heat}
\Psi_t=(1-4t)^{-1}\Psi_{\xi\xi}-2,
\end{equation}
solved exactly by the shrinker $\Psi=(1-4t)\sin^2\xi$ of \eqref{eq:Psi-shrinker}. The diffusivity $(1-4t)^{-1}$ diverges as $t\to1/4$: relative to the fixed comoving grid all arclengths contract, and this divergence is the comoving signature of the finite-time collapse.
\end{remark}

\begin{proposition}[Persistence of monotonicity]
\label{prop:monotonicity}
For every $0\le t<1/4$, the metric $g(t)$ is a monotone Riemannian metric on the qubit state space. The flow stays on the ray spanned by the Bures--Helstrom metric and leaves the cone of nondegenerate monotone Riemannian metrics only at the collapsed limit.
\end{proposition}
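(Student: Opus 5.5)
The argument rests on two facts: Theorem~\ref{thm:BH-flow} gives $g(t)=(1-4t)g_{\mathrm{BH}}$ exactly, and monotone metrics form a cone under positive rescaling. So everything reduces to showing that a positive multiple of a monotone metric is monotone and nondegenerate, and that the multiple degenerates only at $t=1/4$.

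\textbf{Step 1 (cone property).} Fix $0\le t<1/4$ and set $c=1-4t>0$. For any completely positive trace--preserving map $\Phi$, any faithful $\rho$ and any tangent vector $X$, I multiply the contraction inequality for $g_{\mathrm{BH}}$ by $c>0$:
\begin{equation*}
g(t)_{\Phi(\rho)}(\Phi_*X,\Phi_*X)=c\,g_{\mathrm{BH},\Phi(\rho)}(\Phi_*X,\Phi_*X)\le c\,g_{\mathrm{BH},\rho}(X,X)=g(t)_\rho(X,X).
\end{equation*}
This is monotonicity of $g(t)$. Positive definiteness and smoothness on $\cB$ are inherited from $g_{\mathrm{BH}}$, since $c>0$ is constant.

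\textbf{Step 2 (Petz form).} In the Petz parametrization \eqref{eq:petz-form}, $g(t)$ has Morozova--Chentsov function $c\,c_{\mathrm{BH}}=2c/(x+y)$, which corresponds to $f=f_{\mathrm{SLD}}/c$. This $f$ is still operator monotone and satisfies the symmetry $f(t)=tf(1/t)$. It violates only the Fisher normalization $f(1)=1$. I will therefore state explicitly that the cone in the proposition is the cone of monotone metrics, which is closed under positive scaling, rather than the normalized slice. This is the one point needing care: on the normalized slice the flow leaves immediately, so the statement is only true for the unnormalized cone. With that convention, the flow stays on the open ray $\{c\,g_{\mathrm{BH}}:c>0\}$, and Theorem~\ref{thm:BH-flow} traverses it as $c$ decreases from $1$ to $0$.

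\textbf{Step 3 (exit only at collapse).} As $t\to1/4^-$, $c\to0$, so $g(t)\to0$ pointwise as a quadratic form. The limit is the zero tensor, which is degenerate and not a Riemannian metric; this is the collapsed limit. Equivalently, the curvature blows up: $R(t)=6/(1-4t)\to\infty$ by \eqref{eq:R-shrinker}. For all $t<1/4$, Step 1 gives nondegeneracy, so the cone of nondegenerate monotone metrics is left only at $T=1/4$.
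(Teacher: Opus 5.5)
Your proof is correct and is essentially the paper's argument: positive homogeneity of the contraction inequality makes $(1-4t)g_{\mathrm{BH}}$ monotone and nondegenerate for $t<1/4$, and the only exit is the degeneration to the zero quadratic form at $t=1/4$. Your Step 2 point is a correct clarification that the paper leaves implicit by defining monotonicity through contraction alone, i.e.\ by working in the unnormalized cone: $g(t)$ corresponds to $f_{\mathrm{SLD}}/(1-4t)$, so it leaves the Fisher-normalized slice $f(1)=1$ immediately.
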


\begin{proof}
The defining contraction inequality for a monotone metric is homogeneous under multiplication by a positive constant. Since $g_{\mathrm{BH}}$ is monotone and
\begin{equation*}
g(t)=(1-4t)g_{\mathrm{BH}}
\end{equation*}
with $1-4t>0$ for $t<1/4$, every $g(t)$ is monotone. At $t=1/4$ the metric degenerates to the zero quadratic form. Thus the solution does not exit the monotone cone through a non-monotone metric; it reaches the apex of the cone.
\end{proof}

\begin{remark}[The extinction limit]
The extinction is uniform. All distances are multiplied by $\sqrt{1-4t}$, the diameter tends to zero, and the scalar curvature diverges like $(1-4t)^{-1}$. No state is selected in the limit. In particular, this intrinsic Ricci-flow collapse should not be confused with the behavior of a dissipative quantum channel, such as the depolarizing semigroup, which drives states toward the maximally mixed state. Ricci flow collapses the distinguishability geometry itself.
\end{remark}

\begin{figure}[H]
\centering
\begin{tikzpicture}[x=3.45cm,y=2.45cm,>=Latex]
\draw[->] (0,0) -- (1.78,0) node[right]{\footnotesize $s$};
\draw[->] (0,0) -- (0,1.18) node[above]{\footnotesize $\Phi$};
\draw[very thick] plot coordinates {(0,0) (0.205,0.203) (0.410,0.398) (0.615,0.577) (0.820,0.731) (1.024,0.854) (1.229,0.942) (1.434,0.991) (1.571,1.000)};
\draw[thick] plot coordinates {(0,0) (0.197,0.195) (0.394,0.375) (0.591,0.527) (0.788,0.640) (0.985,0.706) (1.133,0.721)};
\draw plot coordinates {(0,0) (0.153,0.150) (0.305,0.282) (0.458,0.382) (0.611,0.438) (0.702,0.447)};
\draw[densely dashed] plot coordinates {(0,0) (0.082,0.080) (0.164,0.146) (0.246,0.188) (0.314,0.200)};
\node[anchor=west,font=\footnotesize] at (1.45,1.06) {$t=0$};
\node[anchor=west,font=\footnotesize] at (1.14,0.74) {$t=0.12$};
\node[anchor=west,font=\footnotesize] at (0.71,0.46) {$t=0.20$};
\node[anchor=west,font=\footnotesize] at (0.32,0.21) {$t=0.24$};
\node[font=\footnotesize] at (0.85,-0.22) {(a) warping profile};
\end{tikzpicture}
\hfill
\begin{tikzpicture}[x=21.5cm,y=0.72cm,>=Latex]
\draw[->] (0,0) -- (0.285,0) node[right]{\footnotesize $t$};
\draw[->] (0,0) -- (0,3.4);
\draw[very thick] (0,1) -- (0.25,0);
\node[anchor=west,font=\footnotesize] at (0.13,0.62) {$a^2=1-4t$};
\draw[thick,densely dashed] plot coordinates {(0,1) (0.0435,1.21) (0.0870,1.53) (0.1131,1.83) (0.1306,2.09) (0.1480,2.45) (0.1567,2.68) (0.1654,2.95)};
\node[anchor=west,font=\footnotesize] at (0.02,2.6) {$R/6=1/(1-4t)$};
\draw[dotted] (0.25,0) -- (0.25,3.3);
\node[anchor=north,font=\footnotesize] at (0.25,-0.05) {$T=\tfrac14$};
\node[font=\footnotesize] at (0.125,-0.65) {(b) scale and curvature};
\end{tikzpicture}
\caption{Ricci flow of the Bures--Helstrom metric. The warping profile contracts homothetically,
while the scalar curvature diverges at the extinction time $T=1/4$.}
\label{fig:shrinker}
\end{figure}

\section{Volume-normalized flow}
\label{sec:normalized}

The unnormalized flow collapses because of the overall scaling degree of freedom. The
volume-normalized Ricci flow removes this scale,
\begin{align}
\label{eq:vol-normalized}
\d_t g&=-2\Ric(g)+\frac{2}{n}\,\bar R\,g,\\
\label{eq:av_curv}
\bar R&=\frac{\int_M R\,dV}{\int_M dV}\,,
\end{align}
where $n=\dim M=3$, and $\bar R$ is the average scalar curvature. This is the flow Hamilton used to deform a positively curved three--metric to a round space form \cite{H82,CK04}, and the Bures--Helstrom metric is one of its fixed points.

We work in the same Hamilton--DeTurck gauge as Section~\ref{sec:gauge}. At the fixed point the scale is stationary, so the comoving label $\xi$ coincides with arclength and the time-dependent diffusivity of \eqref{eq:comoving-heat} reduces to a constant; the scalar flow is then the autonomous equation
\begin{equation}
\label{eq:normalized-flow}
\Psi_t=\Psi_{\xi\xi}-2+\lambda(t)\Psi,
\end{equation}
with multiplier $\lambda=\tfrac{2}{3}\bar R$ inherited from \eqref{eq:vol-normalized}. Equivalently, preserving $\int_0^{\pi/2}\Psi\,d\xi$ requires
\begin{equation}
\label{eq:lambda}
\lambda(t)=\frac{\pi+\Psi_\xi(0,t)-\Psi_\xi(\tfrac{\pi}{2},t)}{\int_0^{\pi/2}\Psi(\xi,t)\,d\xi}\,,
\end{equation}
which at a regular pole, $\Psi_\xi(0,t)=0$, and a reflecting pure--state edge, $\Psi_\xi(\tfrac\pi2,t)=0$, reduces to $\lambda=\pi/\!\int\Psi$. For the Bures--Helstrom profile $\Psi_{\mathrm{BH}}=\sin^2\xi$ one has $\int_0^{\pi/2}\sin^2\xi\,d\xi=\pi/4$ and $\bar R=R_{\mathrm{BH}}=6$, so
\begin{align*}
\lambda_{\mathrm{BH}}&=4\\
&=\frac23\, R_{\mathrm{BH}} .
\end{align*}
Since
\begin{align*}
(\sin^2\xi)_{\xi\xi}-2+4\sin^2\xi&=2\cos2\xi-2+4\sin^2\xi\\
&=0,
\end{align*}
the Bures--Helstrom profile is indeed a fixed point.

\begin{proposition}[Normalized fixed point]
\label{prop:normalized-fixed}
Under the volume-normalized flow \eqref{eq:vol-normalized}, the Bures--Helstrom metric is a fixed point, with $\lambda_{\mathrm{BH}}=4=\tfrac23 R_{\mathrm{BH}}$ in the normalization \eqref{eq:BH-bloch}.
\end{proposition}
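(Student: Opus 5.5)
The proof will work directly at the level of the tensor equation \eqref{eq:vol-normalized}, using that $g_{\mathrm{BH}}$ is Einstein. Then I will check that the scalar reduction gives the same multiplier.

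\textbf{Step 1: the geometric fixed point.} By \eqref{eq:einstein}, $\Ric(g_{\mathrm{BH}})=2g_{\mathrm{BH}}$, and by \eqref{eq:scalar_curv}, $R\equiv 6$. Since $R$ is constant, the average \eqref{eq:av_curv} is $\bar R=6$ on the hemisphere. The same holds on its reflection double $\bS^3$, so the averaging domain does not matter. With $n=3$, the right-hand side of \eqref{eq:vol-normalized} is
\begin{equation*}
-2\cdot 2g_{\mathrm{BH}}+\tfrac23\cdot 6\,g_{\mathrm{BH}}=0 .
\end{equation*}
Hence $g(t)\equiv g_{\mathrm{BH}}$ solves the volume-normalized flow. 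Because the flow is preserved by diffeomorphisms and the DeTurck term vanishes for the stationary solution (take $V=0$), it is a fixed point. More generally, along any solution of \eqref{eq:vol-normalized} one has $\d_t\log dV=-R+\bar R$, whose integral vanishes. Volume is therefore conserved, consistent with reading the fixed point as the homothetic shrinker of Theorem~\ref{thm:BH-flow} rescaled to unit volume.

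\textbf{Step 2: identifying $\lambda$.} I will explain why the scalar multiplier is $\lambda=\tfrac23\bar R$. The angular component of \eqref{eq:vol-normalized} is the angular equation of Proposition~\ref{prop:linear-reduction} plus $\tfrac23\bar R\,\Phi^2\gamma_{ab}$. This term adds $\tfrac23\bar R\,\Psi$ to the $\Psi$-equation, giving \eqref{eq:normalized-flow}. The radial component adds $\tfrac23\bar R$ to $\d_t\log N$. For the fixed point this is compensated exactly: $-2\Phi''/\Phi=2$ and $\tfrac23\bar R=4$ produce the radial Ricci contribution $-2\Ric_{\tau\tau}=-4$. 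So $N\equiv1$ is preserved with $V=0$, and $\xi$ is arclength. Then $\lambda_{\mathrm{BH}}=\tfrac23\cdot6=4$.

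\textbf{Step 3: consistency with the integral normalization.} I will cross-check against \eqref{eq:lambda}. With $\Psi_\xi(0)=\Psi_\xi(\pi/2)=0$ for $\sin^2\xi$, we get $\lambda=\pi/(\pi/4)=4$. The identity $(\sin^2\xi)''-2+4\sin^2\xi=0$ already displayed in the text then confirms stationarity at the scalar level.

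The main obstacle is the gauge bookkeeping in Step 2: showing that the scalar equation \eqref{eq:normalized-flow} with constant diffusivity is legitimate, i.e.\ that no residual DeTurck drift remains. I would settle this by the explicit radial-component check above. Note also that $\lambda$ in \eqref{eq:lambda} fixes $\int\Psi$ rather than the true volume $4\pi\int\Phi^2\,d\tau$. These agree in the $N\equiv1$ gauge, which is why the two normalizations coincide at the fixed point.
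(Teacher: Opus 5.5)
Your proof is correct, and its core argument differs from the paper's. The paper carries out its verification inside the reduced scalar equation \eqref{eq:normalized-flow}. It gets $\lambda_{\mathrm{BH}}=\pi/\int_0^{\pi/2}\sin^2\xi\,d\xi=4$ from the $\int\Psi$ constraint \eqref{eq:lambda}, notes $\bar R=R_{\mathrm{BH}}=6$ so that $\lambda_{\mathrm{BH}}=\tfrac23R_{\mathrm{BH}}$, and checks $(\sin^2\xi)_{\xi\xi}-2+4\sin^2\xi=0$; that is exactly your Step~3. Your Step~1, $-2\Ric(g_{\mathrm{BH}})+\tfrac23\bar R\,g_{\mathrm{BH}}=-4g_{\mathrm{BH}}+4g_{\mathrm{BH}}=0$, is gauge-free and shows directly that $V=0$ works. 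It therefore does not depend on the drift-free equation on a fixed interval being a faithful reduction of \eqref{eq:vol-normalized}, which the paper only asserts. Your Step~2 also derives $\lambda=\tfrac23\bar R$ from the angular component, where the paper just calls it ``inherited''. The paper's scalar route, in turn, ties $\lambda$ to the $\int\Psi$ constraint in the form used for the linearization that follows.

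There are a few slips, none of which affects the conclusion.

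In Step~2 the normalization term adds $\tfrac13\bar R$, not $\tfrac23\bar R$, to $\d_t\log N$, because $\d_t g_{\tau\tau}=2N^2\d_t\log N$. The cancellation $-4+4=0$ you display is for $\d_t g_{\tau\tau}$. For the lapse it reads $2\Phi''/\Phi+\tfrac13\bar R=-2+2=0$; adding $4$ to the paper's $\d_t\log N=-2$ would not cancel. Step~1 already implies the correct statement anyway.

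The stationary solution is the shrinker rescaled to constant volume $\mathrm{Vol}(g_{\mathrm{BH}})$, not to unit volume.

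Your closing explanation is too strong as a general statement. For a general profile with $N\equiv1$ on $[0,\pi/2]$ one finds $\tfrac23\bar R-\lambda=\bigl(4\int\Phi'^2\,d\tau-\pi-2[\Phi\Phi']_0^{\pi/2}\bigr)/\bigl(3\int\Phi^2\,d\tau\bigr)$. This is nonzero in general, because \eqref{eq:normalized-flow} omits the DeTurck drift and the boundary motion. The two normalizations agree at the Bures--Helstrom profile because $V=0$ there and both evaluate to $4$, as you verified directly.
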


\subsection{Linearization}

Since $g_{\mathrm{BH}}$ extends by reflection to the unit round $\bS^3$ (Section~\ref{sec:BH}), we linearize the volume-normalized flow there, using the zonal harmonics of $\bS^3$. The pure states form the totally geodesic equator $\xi=\pi/2$, and the reflecting boundary condition $\Psi_\xi(\pi/2,t)=0$ imposed above retains exactly the rotationally symmetric perturbations that are even under the equatorial reflection $\xi\mapsto\pi-\xi$; the odd perturbations move the pure-state boundary and are not admissible.

Write a rotationally symmetric perturbation through the warping factor,
\begin{equation*}
\Phi=\sin\xi+\epsilon\,w+O(\epsilon^2),
\end{equation*}
with $w$ a zonal function on $\bS^3$. Linearizing the scalar flow \eqref{eq:normalized-flow} about the fixed profile -- the equation in which the DeTurck field vanishes at the fixed point -- gives, modulo the Lagrange multiplier that fixes the volume,
\begin{equation}
\label{eq:A-operator}
\begin{split}
w_t&=\cA\,w,\\
\cA&=\d_{\xi\xi}+2\cot\xi\,\d_\xi+3\\
&=\Delta_{\bS^3}+3,
\end{split}
\end{equation}
the shifted round--sphere Laplacian on rotationally symmetric functions. In the $\Psi$ variable, with $v=\delta\Psi=2\sin\xi\,w$, the operator is conjugate to $\d_{\xi\xi}+4$, and the eigenvalue problem becomes a self-adjoint Sturm--Liouville problem on $[0,\pi/2]$, with boundary conditions $v(0)=0$, at the maximally mixed state, and $v_\xi(\pi/2)=0$, the reflecting pure-state condition. The zonal spherical harmonics on $\bS^3$ have Laplacian eigenvalues $-\ell(\ell+2)$ \cite{P16}, so
\begin{equation}
\label{eq:spectrum}
\sigma_\ell=3-\ell(\ell+2)=-(\ell-1)(\ell+3),
\qquad \ell=0,1,2,\ldots,
\end{equation}
with eigenfunctions $w_\ell=\sin((\ell+1)\xi)/\sin\xi$, equivalently $v_\ell=2\sin((\ell+1)\xi)$. Since $w_\ell(\pi-\xi)=(-1)^\ell w_\ell(\xi)$, the reflecting condition $w'_\ell(\pi/2)=0$ retains exactly the even modes $\ell=0,2,4,\dots$; the odd modes fail it and are excluded.

\begin{theorem}[Linear stability of the normalized Bures--Helstrom metric]
\label{thm:stability}
The Bures--Helstrom metric is a linearly stable fixed point of the volume-normalized Ricci flow, modulo scaling. On the reflection-even perturbations admitted by the pure-state boundary, the rotationally symmetric linearized spectrum is
\begin{equation*}
\sigma_\ell=-(\ell-1)(\ell+3),\qquad \ell=0,2,4,\ldots,
\end{equation*}
so
\begin{equation*}
\sigma_0=3,\qquad \sigma_2=-5,\qquad \sigma_4=-21,\qquad \sigma_6=-45,\ldots .
\end{equation*}
The only non-negative eigenvalue is $\sigma_0=3$, the overall scale, which is fixed by the volume normalization. The genuine deformations have $\ell\ge2$, with $\sigma_\ell\le\sigma_2=-5$, so the fixed point is stable with spectral gap $5$.
\end{theorem}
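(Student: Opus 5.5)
The plan is to reduce the statement to a Sturm--Liouville computation on $[0,\pi/2]$, using the linearized operator \eqref{eq:A-operator} derived just before the theorem. First I take $\cA=\d_{\xi\xi}+2\cot\xi\,\d_\xi+3$ acting on zonal perturbations $w$ of the warping factor. I then pass to $v=2\sin\xi\,w$, which is the variable in which the problem is self-adjoint. A direct computation gives $\sin\xi\,(w''+2\cot\xi\,w')=(\sin\xi\,w)''+\sin\xi\,w$. Hence $\cA$ is conjugate to $\d_{\xi\xi}+4$ acting on $v$, with boundary conditions $v(0)=0$ (regularity at the maximally mixed state) and $v_\xi(\pi/2)=0$ (the reflecting pure-state edge).

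Next I solve this problem explicitly. The eigenfunctions of $-\d_{\xi\xi}$ on $[0,\pi/2]$ with Dirichlet data at $0$ and Neumann data at $\pi/2$ are $\sin((2m+1)\xi)$, $m=0,1,2,\dots$, with eigenvalues $(2m+1)^2$. These form a complete orthogonal basis of $L^2(0,\pi/2)$. Writing $2m+1=\ell+1$ with $\ell=2m$ even, the eigenvalues of $\d_{\xi\xi}+4$ are $4-(\ell+1)^2=-(\ell-1)(\ell+3)$. This matches \eqref{eq:spectrum} restricted to even $\ell$, and agrees with the $\bS^3$ picture $w_\ell=\sin((\ell+1)\xi)/\sin\xi$, $\sigma_\ell=3-\ell(\ell+2)$. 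Completeness shows there are no other rotationally symmetric modes, and the parity argument already given in the text explains why the odd $\ell$ drop out. Evaluating at $\ell=0,2,4,6$ gives $3,-5,-21,-45$, and $\sigma_\ell$ is strictly decreasing for $\ell\ge1$.

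Finally I account for the mode $\ell=0$, $v_0=2\sin\xi$, i.e.\ $w_0=1$, which is $\delta\Psi\propto\sin^2\xi$. This is an infinitesimal multiple of $\Psi_{\mathrm{BH}}$, so it is the scaling direction. The linearization in \eqref{eq:A-operator} was taken modulo the Lagrange multiplier $\lambda(t)$, and in the full linearized normalized flow the variation $\delta\lambda$ from \eqref{eq:lambda} acts only along this direction. Concretely, $\delta\lambda$ is a linear functional times $\Psi_{\mathrm{BH}}$, so it is a rank-one term with range spanned by $v_0$.

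Restricting to the volume-preserving subspace $\int\delta\Psi\,d\xi=0$ therefore removes $\ell=0$. To see that this is consistent, note that $\delta\Psi=\sin^2\xi$ changes the volume, whereas each $\sin((\ell+1)\xi)\sin\xi$ with $\ell\ge2$ even has zero integral, being a difference of cosines $\cos(\ell\xi)-\cos((\ell+2)\xi)$ integrated over $[0,\pi/2]$. Hence the modes $\ell\ge2$ are exactly the volume-preserving ones, they are invariant, and on them the spectrum is $\le -5$. This gives linear stability with gap $5$.

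I expect the main obstacle to be this last step: justifying that the rank-one multiplier correction leaves the $\ell\ge2$ eigenvalues untouched. The plan is to verify directly that $\delta\lambda$ vanishes on each $\ell\ge2$ mode, since $\delta\!\int\Psi=0$ there and the boundary derivatives vanish.
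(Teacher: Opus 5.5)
Your spectral computation is correct and is the paper's own. The conjugation $\sin\xi\,\cA w=(\d_{\xi\xi}+4)(\sin\xi\,w)$, the Dirichlet--Neumann eigenfunctions $\sin((2m+1)\xi)$ with completeness, and the parity restriction are exactly the material preceding the theorem. Beyond that, the paper's proof only asserts that $\ell=0$ is the scale mode fixed by the volume constraint, with the linearization taken modulo the multiplier.

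\textbf{The gap is in the step you add to justify that assertion, which mixes two parametrizations.}

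\begin{itemize}
\item With $\Phi=\sin\xi+\epsilon w$ you have $\delta\Psi=v=2\sin\xi\,w$. So $w_0=1$ gives $v_0=2\sin\xi$, which is not a multiple of $\Psi_{\mathrm{BH}}=\sin^2\xi$.
\item The direction $\delta\Psi\propto\sin^2\xi$ corresponds to $w\propto\sin\xi$, and it is not an eigenfunction: $(\d_{\xi\xi}+4)\sin^2\xi=2$.
\item Hence the rank-one term $\delta\lambda(v)\sin^2\xi$ does not have range $\mathbb{R}v_0$. In fact $\sin^2\xi$ has a nonzero component on every mode, e.g.\ $\int_0^{\pi/2}\sin^2\xi\,\sin3\xi\,d\xi=-2/15$.
\item The functions $\sin\xi\,\sin((\ell+1)\xi)$ whose integrals you compute are $\Psi_{\mathrm{BH}}w_\ell$, i.e.\ the $\delta\Psi$ of the conformal perturbation $\delta g=w_\ell\,g_{\mathrm{BH}}$. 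They are not your eigenfunctions $v_\ell=2\sin((\ell+1)\xi)$, and not eigenfunctions of $\d_{\xi\xi}+4$ at all: for $\ell=2$ the image is $6\cos4\xi$.
\end{itemize}

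\textbf{For the actual modes, both premises of your final verification fail.} You have $\int_0^{\pi/2}v_\ell\,d\xi=2/(\ell+1)\neq0$ and $v_\ell'(0)=2(\ell+1)\neq0$. Linearizing \eqref{eq:lambda} gives
\begin{equation*}
\delta\lambda(v)=\frac4\pi\big(v'(0)-v'(\tfrac\pi2)\big)-\frac{16}{\pi}\int_0^{\pi/2}v\,d\xi,
\qquad
\delta\lambda(v_\ell)=-\frac{8\,\sigma_\ell}{\pi(\ell+1)}\neq0 .
\end{equation*}
So no $v_\ell$ is volume-preserving, and none is an eigenvector of the constrained operator $v\mapsto v''+4v+\delta\lambda(v)\sin^2\xi$.

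The constrained eigenvalues genuinely move. Take $\mu=-5$ and solve $v''+9v=-\delta\lambda(v)\sin^2\xi$ with $v(0)=0$. The solutions satisfy $v'(\pi/2)=-\tfrac{2}{15}\delta\lambda(v)$, so the pure-state condition forces $\delta\lambda(v)=0$. Then $v\propto\sin3\xi$, which contradicts $\delta\lambda(\sin3\xi)=20/(3\pi)$. Thus $-5$ is not an eigenvalue of the constrained linearization in this scalar model. Your identification of the modes $\ell\ge2$ with an invariant, volume-preserving subspace is false, not merely unproved.

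\textbf{How to fix it.} To prove the theorem at the level the paper does, drop this argument. Present the $\sigma_\ell$ as the spectrum of $\cA$ modulo the multiplier, with $\ell=0$ designated as the scale direction, exactly as the paper's proof asserts. A statement about the genuinely volume-constrained linearization would require analyzing the rank-one perturbation directly, and in this model it would not reproduce the $\sigma_\ell$.
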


\begin{proof}
The eigenvalues follow from \eqref{eq:spectrum} together with the restriction to even $\ell$. The constant mode $\ell=0$ rescales the round metric and is fixed by the volume constraint. For even $\ell\ge2$ one has $\sigma_\ell=-(\ell-1)(\ell+3)\le-5$, and these are the genuine geometric deformations. The odd modes are not admissible: in particular the first nonconstant zonal harmonic $w_1=\cos\xi$ satisfies the Obata equation $\nabla^2 w_1=-w_1\,g$ on the unit $\bS^3$, so $\nabla w_1$ is a conformal vector field and the associated perturbation is a Lie derivative of the metric---a pure diffeomorphism; but $w_1$ is odd under $\xi\mapsto\pi-\xi$ and displaces the pure-state equator, so it does not preserve the totally geodesic boundary and is removed by the reflecting condition. The admissible spectrum therefore carries no zero mode, and the fixed point is linearly stable modulo scale, with gap $|\sigma_2|=5$.
\end{proof}

\begin{remark}
The unnormalized and normalized pictures are complementary. In the unnormalized flow the scale mode ($\ell=0$) drives the finite-time extinction $g(t)=(1-4t)g_{\mathrm{BH}}$; the normalized flow fixes that mode and leaves the Bures--Helstrom metric as an asymptotically stable fixed point.
\end{remark}

\section{Outlook}
\label{sec:outlook}

The Bures--Helstrom metric is the constant-curvature endpoint of the monotone qubit metrics, and its Ricci flow is correspondingly explicit. The unnormalized flow remains on the Bures--Helstrom ray and collapses at finite time. The normalized flow fixes the scale and gives a stable round-hemisphere fixed point with a completely computable spectrum.

For general monotone metrics on the qubit state space, the situation is not expected to be homothetic. The radial profile is no longer $\sin\tau$, the curvature is not constant, and the Ricci flow may move nontrivially inside the monotone cone. In that setting one must distinguish three questions: short-time existence of the geometric flow, preservation of Petz monotonicity, and possible finite-time exit from the cone of nondegenerate monotone metrics. The Bures--Helstrom case does not exhibit exit through a non-monotone interior profile; instead, it reaches the apex of the cone by uniform collapse. It is therefore best viewed as the exactly solvable model and normalization point for the broader Ricci-flow problem on the monotone cone, taken up in \cite{sequel}.

\medskip\noindent
\textbf{Declaration of generative AI and AI-assisted technologies in the manuscript preparation process}: During the preparation of this work, the author used ChatGPT (OpenAI) and Claude (Anthropic) for language editing and stylistic refinement of portions of the manuscript. All substantive ideas, models, derivations, and conclusions are solely those of the author. After using these tools, the author reviewed and edited the content as needed and takes full responsibility for the content of the published article.

\end{document}